\newtheorem{lemma}{Lemma}
\theoremstyle{definition}
\theoremstyle{remark}
\DeclareMathOperator{\supp}{supp}
\DeclareMathOperator{\trace}{Tr}
\renewcommand{\d}[0]{\mathrm{d}}
\renewcommand{\vec}[1]{\mathbf{#1}}
\begin{document}
\title[]{Well-defined equations of motion without constraint of external sources}

\author{Kevin~S.~Croker}
\address{Department of Physics and Astronomy, University of Hawai`i at M\=anoa,  2505 Correa Rd., Honolulu, Hawai`i, 96822}
\email{kcroker@phys.hawaii.edu}

\author{Joel~L.~Weiner}
\address{Department of Mathematics, University of Hawai`i at M\=anoa, 2565 McCarthy Mall, Honolulu, Hawai`i 96822}

\author{Duncan~Farrah}
\address{Institute for Astronomy, University of Hawai`i at M\=anoa,  2680 Woodlawn Dr., Honolulu, Hawai`i, 96822}
\address{Department of Physics and Astronomy, University of Hawai`i at M\=anoa,  2505 Correa Rd., Honolulu, Hawai`i, 96822}

\begin{abstract}
  We present a new approach to constrained classical fields that enables the action formalism to dictate how external sources must enter the resulting equations of motion.
    If symmetries asserted upon the varied fields can be modeled as restrictions in Fourier space, we prove that these restrictions are automatically applied to external sources in an unambiguous way.
  In contrast, the typical procedure inserts symmetric ansatze into the Euler-Lagrange differential equations, even for external sources not being solved.
  This requires \emph{ad hoc} constraint of external sources, which can introduce leading-order errors to model systems despite superficial consistency between model field and source terms.
  To demonstrate, we consider Robertson-Walker cosmologies within General Relativity and prove that 
  the influence of point-like relativistic pressure sources on cosmological dynamics cannot be excluded by theoretical arguments.
\end{abstract}
  
\maketitle
  The principle of stationary action is the theoretical foundation of contemporary physics\cite[][\S2.2]{peskin2018introduction}.
    Unlike differential equations, the integral nature of the action makes it attractive for working with non-local constraints, such as those in Fourier space.
  Such constraints are commonly made to reduce the complexity of model equations.
  For example, a symmetry ansatz that removes coordinate dependence in $\hat{z}$ has vanishing Fourier modes off of the $\widehat{xy}$  plane.
  Traditionally, simplifying assumptions such as these are applied to all fields in a given physical model.
  This includes both the dynamical degrees of freedom within the model and fields representing external sources.
  In applied settings, this is justified: the experimenter has physical control over sources and boundaries.
  In observational settings, however, this is not always the case.
  For example, in cosmology, one uses data to reconstruct the types of source present within the Universe and their distribution in spacetime.
  Though there is little disagreement that General Relativity (GR) is the appropriate framework within which to construct cosmological models, there has been significant debate concerning what simplifying assumptions can be made.
  Many authors have argued that the formation of structures must necessarily be taken into account when constructing a cosmological model\cite[e.g.][]{rasanen2004dark, kolb2006cosmic, ellis2011inhomogeneity, buchert2015there}.
  This ``cosmological backreaction,'' though well-motivated by the non-linear nature of Einstein's field equations, has been disputed\cite[e.g.][]{wetterich2003can,  green2014well, green2016simple, kaiser2017there}.
  The debate over backreaction highlights the ambiguities that can present when determining the appropriate notion of ``source'' in idealized models.

  In this paper, we will derive the Euler-Lagrange equations for classical fields on the $N$-torus in a manner that incorporates the additional symmetries imposed by Fourier-space constraint.
We will discover that symmetries imposed on model fields automatically become applied to other fields held fixed during the variation, i.e. sources.
Consequently, a significant advance of our approach to the Euler-Lagrange equations is an unambiguous procedure for converting an unconstrained, microphysical source into a source appropriate for simplified dynamical models.
This work generalizes an existing technique, applicable only with position-independent model fields, to arbitrary shaping in Fourier space\cite{cw2018part1, crokrunfarr}.

\emph{Definitions:}
\label{sec:definitions}
Let the $N$-torus be denoted by $\mathbb{T}^N$, let $I$ be a closed interval on $\mathbb{R}$, and define
\begin{linenomath}
\begin{align}
  \mathcal{M} := I \times \mathbb{T}^N.
\end{align}
\end{linenomath}
Note that $\mathcal{M}$ can accommodate a flat Lorentz metric $f$ with global coordinates $(\mathbf{x}, \eta)$ and that $\eta = \eta_0$ defines a spatial slice with respect to $f$.
In these coordinates, $f = - d\eta^2 + d\mathbf{x}^2$.
With respect to $f$, $\mathbb{T}^N$ is the product of $N$, mutually orthogonal, circles of equal, but arbitrarily large, length.
In this case, $\vec{x}$ becomes a coordinate on the simply connected covering space of the flat $N$-torus: $\mathbb{R}^N$.
Each component of $\vec{x}$ is then well-defined up to integral multiples (translations) of $\mathcal{V}^{1/N}$ on any $\mathbb{T}^N$, where $\mathcal{V}$ is the volume of the torus.

Let $A$ be one of a finite collection of fields on $\mathcal{M}$.
Let $\mathcal{L}(A, \partial_\mu A, \dots)$ be a Lagrange density with convergent power series
in $A$, these other fields, and their derivatives.
We suppose that $A$ can be represented by its spatial Fourier transform $A(\vec{k}, \eta)$ on $\mathbb{T}^N$,
\begin{linenomath}\begin{align}
  A(\vec{x}, \eta) = \sum_\vec{k} e^{i\vec{k}\cdot\vec{x}} A(\vec{k}, \eta). 
\end{align}\end{linenomath}
We will often suppress coordinate dependence for concision.
All modes of the $N$-torus are integer multiples of a fundamental mode,
\begin{linenomath}\begin{align}
  \mathcal{K} := \frac{2\pi}{\mathcal{V}^{1/N}}.
\end{align}\end{linenomath}
The action $S$ characterizing the dynamical evolution of $A$ is the integral of $\mathcal{L}$ over $\mathcal{M}$
\begin{linenomath}\begin{align}
    S := \int_\mathcal{M} \mathcal{L}~\d^{N+1} x.
\end{align}\end{linenomath}
The equations of motion for the field $A$ are determined by demanding that $S$ be critical to all $C^\infty$ variations $\delta A$, of compact support in space and time, \emph{which satisfy the same constraints as the field $A$}\cite[][\S5]{LanczosVariational}.
We emphasize that $\delta A$ is an entirely new field temporarily introduced to determine equations of motion.

Evaluation of the varied Lagrange density gives, by definition
\begin{linenomath}\begin{align}
  \int_\mathcal{M}  \delta A\left\{\frac{\delta \mathcal{L}}{\delta A}\right\}~\d^{N+1}x := 0. \label{eqn:nascent_eom}
\end{align}\end{linenomath}
If the degree of freedom $A$ were unconstrained, then the equations of motion $\delta \mathcal{L}/\delta A = 0$ would follow from the Fundamental Lemma of Variational Calculus.
The Fundamental Lemma \cite[e.g.][\S2.2]{giaquinta2004calculus}, however, requires that the variations $\delta A$ include all $C^\infty$ functions of compact support.
If the field $A$ is constrained in Fourier space, the variations $\delta A$ are necessarily constrained in the same way.
This means that the variations $\delta A$ are not sufficient to allow application of the Fundamental Lemma to extract equations of motion.

\emph{Results:}
We will now generalize the Fundamental Lemma to the case where degrees of freedom are constrained in $k$-space.
\begin{lemma}
  \label{lem:core}
  Let $A(\vec{k}, \eta)$ be the Fourier transform of some field $A(\vec{x}, \eta)$ that appears in $\mathcal{L}$.
  Let $V$ denote the support of $A(\vec{k}, \eta)$.
  Then the equations of motion for $A$ are
  \begin{align}
    \frac{\delta \mathcal{L}}{\delta A} * \mathcal{F}^{-1}\left[\mathbf{1}_V\right] = 0, \label{eqn:final_result}
  \end{align}
  where $*$ denotes convolution\footnote{
    convolution on $\mathbb{T}^N$ is understood as circular/cyclic convolution, see \cite[][p.~265]{bracewell1986fourier}}, $\mathcal{F}^{-1}$ denotes the inverse Fourier transform, and $\mathbf{1}$ denotes the indicator function.
\end{lemma}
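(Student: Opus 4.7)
The plan is to recast the integral equation \eqref{eqn:nascent_eom} in Fourier space, exploit the restriction of $\delta A$ to modes in $V$, and then translate the resulting Fourier-space condition back to position space via the convolution theorem.

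First I would apply Parseval's identity on $\mathbb{T}^N$ to rewrite the spatial integral at each fixed $\eta$. Writing $E := \delta\mathcal{L}/\delta A$, the identity gives
\begin{linenomath}\begin{align}
\int_{\mathbb{T}^N} \delta A(\vec{x},\eta)\, E(\vec{x},\eta)\,\d^{N}x
= \mathcal{V} \sum_{\vec{k}} \delta A(\vec{k},\eta)\, E(-\vec{k},\eta).
\end{align}\end{linenomath}
Since $\delta A$ satisfies the same Fourier-space constraint as $A$, the coefficients $\delta A(\vec{k},\eta)$ vanish off $V$, the sum collapses to $\vec{k}\in V$, and \eqref{eqn:nascent_eom} reduces to
\begin{linenomath}\begin{align}
\int_{I}\d\eta \sum_{\vec{k}\in V}\delta A(\vec{k},\eta)\,E(-\vec{k},\eta) = 0.
\end{align}\end{linenomath}

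Next I would exploit the remaining freedom of the test coefficients. Assuming $A$ (and hence $E$) is real, the reality condition pairs modes as $\delta A(-\vec{k}_0,\eta)=\overline{\delta A(\vec{k}_0,\eta)}$, so the $\pm\vec{k}_0$ contribution to the integrand above is $2\,\Re\!\left[\delta A(\vec{k}_0,\eta)\,\overline{E(\vec{k}_0,\eta)}\right]$. Localizing a test coefficient around a single pair $\{\vec{k}_0,-\vec{k}_0\}\subset V$ and choosing its real and imaginary parts independently as smooth, compactly supported functions of $\eta$, the ordinary Fundamental Lemma of the Calculus of Variations in the time variable forces $E(\vec{k}_0,\eta)=0$. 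Since $\vec{k}_0\in V$ was arbitrary, we obtain the Fourier-space identity $\mathbf{1}_V(\vec{k})\,E(\vec{k},\eta) = 0$.

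Finally I would invert the Fourier transform. By the convolution theorem on $\mathbb{T}^N$, pointwise multiplication of Fourier coefficients is dual to cyclic convolution in $\vec{x}$, so applying $\mathcal{F}^{-1}$ to this identity yields $\mathcal{F}^{-1}[\mathbf{1}_V]*E = 0$, which is precisely \eqref{eqn:final_result}. The main obstacle I anticipate is the reality bookkeeping of the second step: one must verify that $V$ is automatically $\vec{k}\to-\vec{k}$ symmetric, and that the Hermitian-paired test coefficients in $V$ still supply enough independent smooth degrees of freedom in $\eta$ to separately annihilate the real and imaginary parts of $E(\vec{k},\eta)$ at every $\vec{k}\in V$. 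The remaining ingredients (Parseval, the ordinary Fundamental Lemma in $\eta$, and the convolution theorem on the torus) are routine.
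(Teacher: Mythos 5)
Your proposal is correct, but it reaches Equation~(\ref{eqn:final_result}) by a genuinely different route than the paper. The paper never argues mode by mode: it represents every admissible variation as an arbitrary unconstrained variation filtered by $\mathbf{1}_V$ in $k$-space, uses Plancherel to move the inner product $\left<\delta A, \delta\mathcal{L}/\delta A\right>$ to Fourier space, commutes the multiplier $\mathbf{1}_V$ onto $\delta\mathcal{L}/\delta A$, and returns to position space, at which point the variations are fully arbitrary and the ordinary (spatial) Fundamental Lemma extracts the convolution equation directly from under the action integral. You instead expand the spatial integral by Parseval at fixed $\eta$, restrict the sum to $\vec{k}\in V$ using the constraint on $\delta A$, and test one Hermitian pair $\{\vec{k}_0,-\vec{k}_0\}$ at a time with the one-dimensional Fundamental Lemma in $\eta$, concluding the pointwise statement $\mathbf{1}_V(\vec{k})\,\delta\mathcal{L}/\delta A(\vec{k},\eta)=0$ and then converting to the convolution form. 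Your version is more elementary and makes the content of the convolution transparent (it is exactly the restriction of the Euler--Lagrange expression to the modes in $V$), but it carries the bookkeeping you flag: you must check that single-pair variations are admissible (i.e.\ that the constraint is ``Fourier support contained in $V$''), that reality of $A$ forces $V$ to be symmetric under $\vec{k}\to-\vec{k}$ so the pair lies in $V$, and that the real and imaginary parts of the test coefficient give independent smooth functions of $\eta$; all of this does go through for real classical fields, and for $\vec{k}_0=\vec{0}$ the coefficient is real and one test function suffices. The paper's filtered-variation argument buys exactly the avoidance of this reality and mode-symmetry bookkeeping, since it never decomposes into individual modes and parameterizes all admissible variations at once; the price is that it lands on the convolution form without exhibiting the equivalent, and arguably sharper, mode-space identity that your argument produces along the way.
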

\begin{proof}
  Suppose we have varied $\mathcal{L}$ with respect to $A$.
  Use compactness of $\mathcal{M}$ to write the varied action as an iterated integral
  \begin{linenomath}\begin{align}
      \int_\eta \int_\vec{x}  \delta A\left\{\frac{\delta \mathcal{L}}{\delta A}\right\}\d\vec{x}~\d\eta. \label{eqn:fubini}
  \end{align}\end{linenomath}
  The temporal integrand can be recognized as an inner product on $L^2$ (i.e. the Lebesgue square-integrable functions)
  \begin{linenomath}\begin{align}
      \int_\eta \left<\delta A, \frac{\delta \mathcal{L}}{\delta A}\right>_{\mathbb{T}^N}~\d\eta. \label{eqn:l2_ip}
  \end{align}\end{linenomath}
  We compute this inner product in $k$-space.
  Use the indicator function to express the Fourier transform of a variation as
  \begin{linenomath}\begin{align}
      \delta A(\vec{k}, \eta) := \overline{\delta A}(\vec{k}, \eta) \mathbf{1}_{V}(\vec{k}), \label{eqn:fourier_variation}
  \end{align}\end{linenomath}
  where $\overline{\delta A}(\vec{x}, \eta)$ denotes an arbitrary $C^\infty$ variation of compact support in position and time.
  The inner product in Expression~(\ref{eqn:l2_ip}) is $L^2$ basis invariant (i.e. Plancheral's theorem):
  \begin{linenomath}\begin{align}
      \left<\delta A, \frac{\delta \mathcal{L}}{\delta A}\right>_{\vec{x}} = \left<\overline{\delta A}~\mathbf{1}_{V}, \frac{\delta \mathcal{L}}{\delta A}\right>_{\vec{k}}.
  \end{align}\end{linenomath}
  Using standard properties of the inner product on $L^2$, we may commute the indicator function
  \begin{linenomath}\begin{align}
      \left<\overline{\delta A}~\mathbf{1}_V, \frac{\delta \mathcal{L}}{\delta A}\right>_{\vec{k}} = \left<\overline{\delta A}, \frac{\delta \mathcal{L}}{\delta A} \mathbf{1}_V \right>_{\vec{k}}.
  \end{align}\end{linenomath}
  Using Plancheral's theorem to return to position-space produces the convolution of $\delta \mathcal{L}/\delta A$ against the inverse Fourier transform of the indicator function for $V$ 
  \begin{linenomath}\begin{align}
      \left<\overline{\delta A}, \frac{\delta \mathcal{L}}{\delta A} \mathbf{1}_V \right>_{\vec{k}} = \left<\overline{\delta A}, \frac{\delta \mathcal{L}}{\delta A} * \mathcal{F}^{-1}\left[\mathbf{1}_V\right] \right>_{\vec{x}}.
  \end{align}\end{linenomath}
  We conclude that the varied action is equal to
  \begin{linenomath}\begin{align}
      \int_\mathcal{M} \overline{\delta A} \left\{\frac{\delta \mathcal{L}}{\delta A} * \mathcal{F}^{-1}\left[\mathbf{1}_V\right]\right\}\d^{N+1}x~ := 0. \label{eqn:eevee_convolution}
  \end{align}\end{linenomath}
  The variations under the action integral $\overline{\delta A}$ are now arbitrary in $C^\infty$ and have compact support.
  In this form, the Fundamental Lemma of Variational Calculus can be applied to extract consistent equations of motion from under the action.\end{proof}

For clarity, we note that the consistent equations of motion, given in Equation~(\ref{eqn:final_result}), expand to
\begin{align}
  \left\{ \partial_\mu \left[\frac{\partial \mathcal{L}}{\partial\left(\partial_\mu A\right)}\right] - \frac{\partial \mathcal{L}}{\partial A} + \cdots \right\} * \mathcal{F}^{-1}\left[\mathbf{1}_V\right] = 0, \label{eqn:final_result_expanded}
\end{align}
where dots denote higher derivative terms that may result from more intricate $\mathcal{L}$.
Note that the typical Euler-Lagrange equations
\begin{align}
  \partial_\mu \left[\frac{\partial \mathcal{L}}{\partial\left(\partial_\mu A\right)}\right] - \frac{\partial \mathcal{L}}{\partial A} + \cdots = 0
\end{align}
obtain from Equation~(\ref{eqn:final_result_expanded}) only when $V \to \mathbb{R}^n$.
In this limit, $\mathcal{F}^{-1}\left[\mathbf{1}_V\right]$ becomes the Dirac delta distribution, which is the identity under convolution.
Explicitly, Equation~(\ref{eqn:final_result_expanded}) is the correct generalization of the Euler-Lagrange equations in the presence of a particular class of \emph{non-local} constraint.
This class of non-local constraint is ubiquitous.
For example, whenever symmetry assumptions asserted upon the field $A$ remove coordinate dependence, Fourier-space support along that reciprocal coordinate axis collapses to the origin.
We emphasize that the convolution in Equation~(\ref{eqn:final_result_expanded}) is non-trivial.
If $\mathcal{L}$ contains external sources, \emph{a priori} constraint of external sources to respect the symmetries of $A$ is no longer required.
Additional constraint of external sources, distinct from that required by Equation~(\ref{eqn:final_result_expanded}), will at best introduce additional physical assumptions into the model.
At worst, \emph{ad hoc} constraint will fail to correctly capture the symmetries required by the action, yet still introduce a source appearing superficially to do so.

Note that it is possible to replace $\mathbf{1}_V$ in Equation~(\ref{eqn:fourier_variation}) with any other strictly positive function $\mu(\vec{k})$ with support on $V$.
Because both source and fields are filtered in Equation~(\ref{eqn:eevee_convolution}), the true $A$ can still be recovered by a deconvolution against $\mathcal{F}^{-1}\left[\mu^{-1}\right]$.

\emph{Discussion:}
Lemma~\ref{lem:core} allows, for the first time, unambiguous modeling of physical systems where the source terms are not under the modeler's direct experimental control.
  Under Lemma 1, various assumptions about the external source translate into testable observational consequences.
  To demonstrate, we will apply Lemma 1 to the well-known Robertson-Walker (RW) system.
In GR, the field to be determined is the spacetime metric tensor $g$.
In order that cosmological problems remain well-posed\cite{d1976existence}, attention is often restricted\cite{peebles1993principles} to $g$ defined on $I \times \mathbb{T}^3$. 
The source for $g$ is the stress-energy tensor $T$.
It is not varied when determining the Einstein field equations.
Instead, the stress is defined implicitly through the variation of some matter action $S_M$ with respect to $g$,
\begin{align}
  \delta S_M \propto: \int_\mathcal{M} T \cdot \delta g\sqrt{-\det{g}}~\d^4x,
\end{align}
where $\cdot$ denotes complete contraction.
Note that $T$ is only well-defined under a varied action integral.
Variation of the remaining contribution to the GR Lagrange density gives,
\begin{align}
  \delta S_{\mathrm{GR}} = \int_{\mathcal{M}} \delta g\cdot\left[G - \kappa^2T\right]\sqrt{-\det{g}}~\d^4x,
\end{align}
where $G$ denotes the Einstein tensor and $\kappa^2$ is a dimensionful coupling constant.

If $g$ is unconstrained, the Fundamental Lemma can be applied and the Einstein field equations can be extracted from under the varied integral.
In RW cosmology, however, $g$ is defined to be isotropic and homogeneous, e.g.
\begin{align}
  g := a^2(\eta) f, 
\end{align}
where recall that we have defined $f$ to be the flat metric on $\mathcal{M}$.
In this form, notice that $g$ is constrained in Fourier space: all of its spatial derivatives vanish.
In other words, $g$ and its variations $\delta g$ are constrained to have singleton support in $k$-space
\begin{align}
  \supp g(\vec{k}) = \supp \delta g(\vec{k}) = \vec{0}.
\end{align}
As proved in Lemma~\ref{lem:core}, equations of motion (i.e. Einstein's equations appropriate for the RW ansatz) cannot be extracted without first convolving against $\mathcal{F}^{-1}[\delta(\vec{k})]$.
Performing\cite{cw2018part1} this convolution and extracting the equation of motion gives
\begin{align}
  6\mathcal{V}\frac{\d^2a}{\d \eta^2} - \kappa^2a^3\int_\mathcal{V}\trace{T}(\eta,\vec{x})~\d^3x = 0,
\end{align}
where $\trace$ denotes the (coordinate invariant) trace of the stress.
Assuming the stress to be of Type I, rearranging and dividing by the volume gives
\begin{align}
  \frac{\d^2a}{\d \eta^2} = \frac{4\pi G}{3}a^3\left<\overline{\rho} - \sum_{i=1}^3 \overline{\mathcal{P}}_i\right>_\mathcal{V},
\end{align}
where Newton's constant $G$ enters via $\kappa^2 = 8\pi G$.
In this equation, $\overline{\rho}$ is the time-like eigenvalue (i.e. energy density) and $\overline{\mathcal{P}}_i$ are the space-like eigenvalues (i.e. principal pressures)\cite[][\S B]{crokrunfarr}.
If one writes each principal pressure as an isotropic pressure $\mathcal{P}(\eta, \vec{x})$ plus an anisotropic contribution, the anisotropic contributions will vanish when averaged over the 3-torus, leaving
\begin{align}
  \frac{\d^2a}{\d \eta^2} = \frac{4\pi G}{3}a^3\left<\overline{\rho} - 3\mathcal{P}\right>_\mathcal{V}. \label{eqn:friedmann}
\end{align}
This Friedmann equation has the expected form, but the position-independent energy density and isotropic pressure are unambiguously volume averages over the position-dependent, microphysical, quantities.
This result contradicts widely repeated arguments that stresses interior to compact objects do not contribute to the Friedmann source\citep[e.g.][]{einstein1945influence, einstein1946corrections, weinberg2008cosmology, peebles2020large}.
  We have proved that not only can such stresses contribute, but they can influence dynamics at leading order.
The existence of objects that contribute leading-order pressures to Friedmann's equations now becomes an observational question.

In summary, we have leveraged the action integral itself to derive the appropriate Euler-Lagrange equations in the presence of Fourier-space constraint.
  Our result applies generally to classical fields on the $N$-torus and is relevant because Fourier-space constraint is implicit whenever symmetry assumptions remove coordinate dependence.
  The resulting Euler-Lagrange equations feature a necessary convolution operation, which reduces to the familiar presentation in the unconstrained limit.
  This convolution operation interacts with unconstrained external sources to produce effective sources that are consistent with the model constraint.
  The need for \emph{ad hoc} assumptions to align external sources with model symmetries is replaced with an unambiguous procedure, determined directly from the action.

\bibliographystyle{unsrtnat}
\bibliography{level26}

\end{document}